\newtheorem{theorem}{Theorem}
\newtheorem{proposition}[theorem]{Proposition}
\newtheorem{lemma}[theorem]{Lemma}
\theoremstyle{remark}
\newcommand{\Z}{\mathbb{Z}}
\newcommand{\R}{\mathbb{R}}
\newcommand{\eps}{\epsilon}
\newcommand{\abs}[1]{\lvert#1\rvert}
\newcommand{\bigabs}[1]{\big\lvert#1\big\rvert}
\newcommand{\biggabs}[1]{\bigg\lvert#1\bigg\rvert}
\newcommand{\Biggabs}[1]{\Bigg\lvert#1\Bigg\rvert}
\newcommand{\sums}[1]{\sum_{\substack{#1}}}
\DeclareMathOperator{\cdf}{CDF}
\DeclareMathOperator{\adf}{ADF}
\DeclareMathOperator{\psc}{PSC}
\DeclareMathOperator{\erfc}{erfc}
\begin{document}

\title[Aperiodic correlation of sequence pairs]{Sequence pairs with asymptotically\\optimal aperiodic correlation}

\author{Christian G\"unther}

\author{Kai-Uwe Schmidt}
\address{Department of Mathematics, Paderborn University, Warburger Str.\ 100, 33098 Paderborn, Germany.}

\email{chriguen@math.upb.de, kus@math.upb.de}

\thanks{The authors are supported by German Research Foundation (DFG)}

\date{22 March 2018 (revised 13 March 2019)}

\begin{abstract}
The Pursley-Sarwate criterion of a pair of finite complex-valued sequences measures the collective smallness of the aperiodic autocorrelations and the aperiodic crosscorrelations of the two sequences. It is known that this quantity is always at least $1$ with equality if and only if the sequence pair is a Golay pair. 
We exhibit pairs of complex-valued sequences whose entries have unit magnitude for which the Pursley-Sarwate criterion tends to $1$ as the sequence length tends to infinity. Our constructions use different carefully chosen Chu sequences.
\end{abstract}

\maketitle

\thispagestyle{empty}


\section{Introduction and Results}

In this paper, a \emph{sequence} of length $n$ is a finite sequence of $n$ complex numbers; such a sequence is called \emph{binary} if each entry is $-1$ or $1$ and it is called \emph{unimodular} if each entry has unit magnitude.
\par
Let $A=(a_0,a_1,\dots,a_{n-1})$ and $B=(b_0,b_1,\dots,b_{n-1})$ be two sequences of length~$n$, where $A$ und $B$ contain at least one nonzero entry. The \emph{aperiodic crosscorrelation} of $A$ and $B$ at shift $u$ is defined to be
\[
C_{A,B}(u)=\sum_{j\in\Z}a_j\overline{b_{j+u}},
\]
with the convention that $a_j=b_j=0$ if $j<0$ or $j\ge n$. The \emph{aperiodic autocorrelation} of $A$ at shift $u$ is then $C_{A,A}(u)$. 
\par
There is sustained interest in sequences with small correlation (see~\cite{Sch2016} for a survey of recent developments), mainly because small correlation helps to separate a useful signal from noise or unwanted signals. In particular, small crosscorrelation is usually required to ensure that sequences can be distinguished well from each other and small autocorrelation is usually required to keep the transmitter and the receiver synchronised.
\par
The collective smallness of the aperiodic crosscorrelations of the sequences $A$ and $B$ is measured by the \emph{crosscorrelation demerit factor} of $A$ and~$B$, which is defined to be
\[
\cdf(A,B)=\frac{\sum_{u\in\Z}\abs{C_{A,B}(u)}^2}{C_{A,A}(0)\cdot C_{B,B}(0)}.
\]
Accordingly, the collective smallness of the aperiodic autocorrelations of $A$ is measured by the \emph{autocorrelation demerit factor} of $A$, which is defined to be
\[
\adf(A)=\frac{\sum_{u\in\Z\setminus\{0\}}\abs{C_{A,A}(u)}^2}{C_{A,A}(0)^2}.
\]
Notice that $C_{A,A}(0)=C_{B,B}(0)=n$ if $A$ and $B$ are unimodular. The reciprocals of $\cdf(A,B)$ and $\cdf(A)$ are known as the \emph{crosscorrelation merit factor} of $A$ and $B$ and the \emph{autocorrelation merit factor} of $A$, respectively.
\par
A fundamental relationship between the three quantities $\adf(A)$, $\adf(B)$, and $\cdf(A,B)$ is given by
\begin{equation}
\label{eqn:bound_PSC}
-\sqrt{\adf(A)\adf(B)}\le\cdf(A,B)-1\le \sqrt{\adf(A)\adf(B)},
\end{equation}
as proved by Pursley and Sarwate~\cite{PurSar1976} for binary sequences and generalised by Katz and Moore~\cite{KatMoo2017} for general sequences. Following Boothby and Katz~\cite{BooKat2017}, we define the \emph{Pursley-Sarwate criterion} of~$A$ and~$B$ to be
\[
\psc(A,B)=\sqrt{\adf(A)\adf(B)}+\cdf(A,B).
\]
From~\eqref{eqn:bound_PSC} we obtain $\psc(A,B)\ge 1$. Hence in order to design pairs of sequences $(A,B)$ with simultaneously small aperiodic autocorrelations and crosscorrelations, we would like to have $\psc(A,B)$ close to $1$.
\par
Katz~\cite{Kat2016} and Boothby and Katz~\cite{BooKat2017} studied the Pursley-Sarwate criterion of sequence pairs derived from m-sequences and Legendre sequences and generalisations of them. This gives pairs of unimodular and binary sequences whose Pursley-Sarwate criterion is close to $1$, but strictly bounded away from~$1$, as the sequence length tends to infinity. 
\par
In fact, pairs of unimodular sequences $(A,B)$ with $\psc(A,B)=1$ were recently classified by Katz and Moore~\cite{KatMoo2017} to be exactly the \emph{Golay pairs}, namely pairs of unimodular sequences $(A,B)$ satisfying
\[
C_{A,A}(u)+C_{B,B}(u)=0\quad\text{for all $u\ne 0$}.
\]
Golay pairs of unimodular sequences are known to exist for infinitely many, though not for all, lengths (see~\cite{Fie2013} for the existence for small lengths). The classification in~\cite{KatMoo2017} does however not say anything about the individual quantities $\adf(A)$, $\adf(B)$, and $\cdf(A,B)$ for a Golay pair $(A,B)$. These values are known for the \emph{Rudin-Shapiro pairs}, which are Golay pairs $(A_m,B_m)$ of binary sequences of length $2^m$ and satisfy~\cite{Lit1968}
\[
\adf(A_m)=\adf(B_m)=\tfrac{1}{3}(1-(-1/2)^m).
\]
Since $\psc(A_m,B_m)=1$, we have
\[
\cdf(A_m,B_m)=\tfrac{1}{3}(2+(-1/2)^m).
\]
Therefore, as $m\to\infty$, we have $\adf(A_m)\to 1/3$ and $\adf(B_m)\to1/3$ and $\cdf(A_m,B_m)\to2/3$.
\par
In this paper we exhibit unimodular sequences whose Pursley-Sarwate criterion is asymptotically $1$ and for which (unlike for general Golay pairs) we can control the autocorrelation and crosscorrelation demerit factors. Our results involve \emph{Chu} sequences~\cite{Chu1972}, which are unimodular sequences of length~$n$ of the form
\[
Z_n^{(a)}=(z_0,z_1,\dots,z_{n-1}),\quad z_j=e^{\pi iaj^2/n},
\]
where~$a$ is an integer (Chu~\cite{Chu1972} used a slightly different definition when $n$ is odd).
\par
Several authors~\cite{New1965},~\cite{StaBoc2000},~\cite{Mer2013} have shown independently that $1/\adf(Z_n^{(1)})$ grows like~$\sqrt{n}$ and the exact constant has been determined by the second author~\cite{Sch2013} by showing that
\begin{equation}
\lim_{n\to\infty}\sqrt{n}\,\adf(Z_n^{(1)})=\frac{2}{\pi},   \label{eqn:adf_chu}
\end{equation}
which confirms previous numerical evidence obtained by Littlewood~\cite{Lit1966} and B\"omer and Antweiler~\cite{AntBom1990}. In fact, since $\adf(Z_n^{(1)})$ tends to zero, this immediately implies that
\[
\lim_{n\to\infty} \psc(Z_n^{(1)},Z_n^{(1)})=1.
\]
However, the pair $(Z_n^{(1)},Z_n^{(1)})$ would be a bad choice when good crosscorrelation is required since the crosscorrelation at the zero shift equals the sequence length $n$. This problem is avoided by taking the pair $(Z_n^{(1)},Z_n^{(-1)})$. Indeed it can be shown using Lemma~\ref{lem:bound_on_generalised_gauss_sum} that in this case the crosscorrelation at the zero shift is of the order $\sqrt{n}$. Our first result is the following theorem.
\begin{theorem}
\label{thm:main2}
For each $n$, let $X_n=Z_n^{(1)}$ and $Y_n=Z_n^{(-1)}$ be Chu sequences of length $n$. Then, as $n\to\infty$,
\begin{enumerate}[(i)]
\item $\adf(X_n)\to 0$, $\adf(Y_n)\to 0$,
\item $\cdf(X_n,Y_n)\to 1$,
\item $\psc(X_n,Y_n)\to 1$.
\end{enumerate}
\end{theorem}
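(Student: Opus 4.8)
The plan is to derive part~(i) directly from~\eqref{eqn:adf_chu} together with one elementary symmetry, and then to read off parts~(ii) and~(iii) from part~(i) using only the Pursley-Sarwate inequality~\eqref{eqn:bound_PSC}. In particular, no hard estimate is needed for this theorem.

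For $X_n=Z_n^{(1)}$ the assertion $\adf(X_n)\to0$ is precisely~\eqref{eqn:adf_chu}, which even gives the rate $\adf(X_n)\sim 2/(\pi\sqrt n)$. For $Y_n=Z_n^{(-1)}$ I would observe that $Y_n$ is the entrywise complex conjugate of $X_n$, since its $j$-th entry is $e^{-\pi ij^2/n}=\overline{e^{\pi ij^2/n}}$. Because $\abs{C_{\overline{A},\overline{A}}(u)}=\abs{C_{A,A}(u)}$ for every sequence $A$ and every shift $u$ (indeed $C_{\overline{A},\overline{A}}(u)=\overline{C_{A,A}(u)}$), replacing a sequence by its complex conjugate leaves $\adf$ unchanged. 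Hence $\adf(Y_n)=\adf(X_n)\to0$, which is part~(i).

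For part~(ii), apply~\eqref{eqn:bound_PSC} to the pair $(X_n,Y_n)$ to get $\abs{\cdf(X_n,Y_n)-1}\le\sqrt{\adf(X_n)\adf(Y_n)}$; the right-hand side tends to $0$ by part~(i), so $\cdf(X_n,Y_n)\to1$. (One may instead use the identity $\sum_{u\in\Z}\abs{C_{A,B}(u)}^2=\sum_{v\in\Z}C_{A,A}(v)\,\overline{C_{B,B}(v)}$: the term $v=0$ equals $n^2$ for unimodular $A,B$ of length $n$, and by Cauchy-Schwarz the rest is at most $n^2\sqrt{\adf(X_n)\adf(Y_n)}$ in modulus.) Part~(iii) is then immediate, since $\psc(X_n,Y_n)=\sqrt{\adf(X_n)\adf(Y_n)}+\cdf(X_n,Y_n)$ is a sum of two terms whose limits, by parts~(i) and~(ii), are $0$ and $1$.

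The only step that is not completely mechanical is the identity $\adf(Z_n^{(-1)})=\adf(Z_n^{(1)})$, which is just the one-line conjugation symmetry above, so I anticipate no real obstacle for Theorem~\ref{thm:main2} itself. The substantial work lies elsewhere: showing that $\abs{C_{X_n,Y_n}(0)}$ has order $\sqrt n$ (so that the pair is genuinely useful when small crosscorrelation is required), and, above all, analysing the later constructions in which $\adf$ does not tend to $0$; these are where Lemma~\ref{lem:bound_on_generalised_gauss_sum} and careful incomplete-Gauss-sum asymptotics enter, and none of that is needed here.
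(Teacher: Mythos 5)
Your proof is correct, and for part~(ii) it takes a genuinely different --- and more elementary --- route than the paper. For part~(i) you and the paper do essentially the same thing: the paper feeds $a=-1$ into Lemma~\ref{lem:mf1} and invokes Lemma~\ref{lem:mf_chu_schmidt}, while you note that $Z_n^{(-1)}=\overline{Z_n^{(1)}}$ entrywise and that conjugation preserves autocorrelation magnitudes; both arguments reduce to the observation that the two sequences have identical $\adf$, and both ultimately rest on~\eqref{eqn:adf_chu}. For part~(ii), however, the paper does \emph{not} use the Pursley--Sarwate inequality: it computes $\cdf(X_n,Y_n)$ directly by expressing the crosscorrelations as generalised Gauss sums $S_{n-u}(2/n,u/n)$ and applying Lemmas~\ref{lem:bound_on_generalised_gauss_sum} and~\ref{lem:asymptotic_generalised_gauss_sum}, obtaining $\cdf(X_n,Y_n)=1+O(n^{-1/6})$. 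Your observation that $\abs{\cdf(X_n,Y_n)-1}\le\sqrt{\adf(X_n)\adf(Y_n)}=\adf(X_n)=O(n^{-1/2})$ by~\eqref{eqn:bound_PSC} is valid (the Katz--Moore version applies to general sequences), shorter, and in fact yields a sharper error term for this particular theorem; your parenthetical Parseval/Cauchy--Schwarz identity is also correct and is essentially the proof of~\eqref{eqn:bound_PSC} itself. What the paper's heavier machinery buys is Theorem~\ref{thm:main}, where $\adf\to 1/2$ and the Pursley--Sarwate bound only gives $\abs{\cdf-1}\le 1/2$, which cannot pin down the limit; there the Gauss-sum analysis is unavoidable, and the authors evidently chose to run both theorems through the same pipeline. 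For Theorem~\ref{thm:main2} alone, your shortcut is perfectly sound.
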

\par
In our second result we construct a pair of unimodular sequences from two Chu sequences of even length such that the Pursley-Sarwate criterion is asymptotically $1$ and the autocorrelation and crosscorrelation demerit factors are asymptotically balanced, which means that they all tend to the same constant~$1/2$.
\begin{theorem}
\label{thm:main}
For each $n$, let $X_n=Z_{2n}^{(n+1)}$ and $Y_n=Z_{2n}^{(n-1)}$ be Chu sequences of length $2n$. Then, as $n\to\infty$,
\begin{enumerate}[(i)]
\item $\adf(X_n)\to 1/2$, $\adf(Y_n)\to 1/2$,
\item $\cdf(X_n,Y_n)\to 1/2$,
\item $\psc(X_n,Y_n)\to 1$.
\end{enumerate}
\end{theorem}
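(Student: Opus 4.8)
The plan is to reduce everything to one explicit geometric sum. For any Chu sequence the quadratic terms cancel in the autocorrelation, so $C_{Z_N^{(a)},Z_N^{(a)}}(u)$ is a plain geometric sum; evaluating it for $X_n=Z_{2n}^{(n+1)}$ gives, for $u\neq 0$ (and $u\neq\pm n$ when $n$ is odd),
\[
\abs{C_{X_n,X_n}(u)}^2=
\begin{cases}
\dfrac{\sin^2(2\pi m^2/n)}{\sin^2(\pi m/n)} & \text{if }u=2m,\\[1.5ex]
\dfrac{\cos^2(\pi u^2/(2n))}{\cos^2(\pi u/(2n))} & \text{if }u\text{ is odd},
\end{cases}
\]
while $\abs{C_{X_n,X_n}(\pm n)}=n$ when $n$ is odd. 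The identical formulas hold with $X_n$ replaced by $Y_n=Z_{2n}^{(n-1)}$, so $\adf(Y_n)=\adf(X_n)$; in fact a short computation with these geometric sums yields the exact identity $C_{Y_n,Y_n}(w)=e^{-\pi iw/n}\,C_{X_n,X_n}(w)$ for every $w\in\Z$, which will drive part~(ii). Note that here, unlike in Theorem~\ref{thm:main2}, no Gauss-sum estimate (Lemma~\ref{lem:bound_on_generalised_gauss_sum}) is needed, because the autocorrelations are elementary.

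For part~(i) I would prove $\sum_{u\neq 0}\abs{C_{X_n,X_n}(u)}^2=2n^2+O(n^{3/2})$. The whole contribution comes, up to $O(n^{3/2})$, from shifts $u$ near $\pm n$. For $n$ even, writing $u=\pm n+j$ with $j$ odd, the formula above gives $\abs{C_{X_n,X_n}(\pm n+j)}^2=(1+o(1))\,4n^2/(\pi^2j^2)$ uniformly for $1\le\abs j\le\sqrt n$, so each of the two clusters sums to $(1+o(1))\tfrac{4n^2}{\pi^2}\sum_{j\text{ odd}}j^{-2}=(1+o(1))\,n^2$, i.e.\ $2n^2$ in total; for $n$ odd the single shifts $u=\pm n$, with $\abs{C_{X_n,X_n}(\pm n)}=n$, play this role and again give $n^2$ each. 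All other shifts contribute only $O(n^{3/2})$: for $\abs u$ bounded away from $0$ and $\pm n$ by $\sqrt n$ one bounds the two displayed ratios crudely and sums using $\sum_{m\ge M}m^{-2}=O(1/M)$, and the neighbours of $0$ and of $\pm n$ of the other parity are small because there the numerator vanishes to fourth order in the local variable, giving $\sum_{\abs j\le\sqrt n}O(j^2)=O(n^{3/2})$. Hence $\adf(X_n)=\adf(Y_n)=\tfrac{1}{(2n)^2}\,(2n^2+O(n^{3/2}))\to\tfrac12$.

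For part~(ii) I would use the elementary identity $\sum_{u\in\Z}\abs{C_{A,B}(u)}^2=\sum_{w\in\Z}C_{B,B}(w)\,\overline{C_{A,A}(w)}$, valid for all sequences $A,B$ of the same length (expand both sides and interchange the order of summation). Taking $(A,B)=(X_n,Y_n)$ and using $C_{Y_n,Y_n}(w)=e^{-\pi iw/n}C_{X_n,X_n}(w)$ together with $\abs{C_{X_n,X_n}(-w)}=\abs{C_{X_n,X_n}(w)}$, this becomes
\[
\sum_{u\in\Z}\abs{C_{X_n,Y_n}(u)}^2=\sum_{w\in\Z}\cos(\pi w/n)\,\abs{C_{X_n,X_n}(w)}^2 .
\]
Now reuse the localisation from part~(i): the term $w=0$ contributes $1\cdot(2n)^2=4n^2$; the mass near $w=\pm n$, where $\cos(\pi w/n)=-1+o(1)$ uniformly, contributes $(-1+o(1))\cdot 2n^2$; and the remaining $O(n^{3/2})$ of the energy contributes $o(n^2)$ since $\abs{\cos}\le 1$. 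Thus $\sum_u\abs{C_{X_n,Y_n}(u)}^2=4n^2-2n^2+o(n^2)=2n^2+o(n^2)$ and $\cdf(X_n,Y_n)=\tfrac{1}{(2n)^2}(2n^2+o(n^2))\to\tfrac12$. Part~(iii) is then immediate: $\psc(X_n,Y_n)=\sqrt{\adf(X_n)\adf(Y_n)}+\cdf(X_n,Y_n)\to\sqrt{\tfrac14}+\tfrac12=1$.

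The one genuinely delicate step is the asymptotic evaluation of $\sum_{u\neq 0}\abs{C_{X_n,X_n}(u)}^2$ in part~(i): establishing that the energy localises at $u\approx\pm n$ with total exactly $2n^2$, keeping the two parity cases of $n$ apart (they decide which shifts are resonant and the local shape of $\abs{C_{X_n,X_n}(\pm n+j)}^2$), and bounding the bulk and near-$0$ remainders. In particular one must resist replacing $\sin^2(2\pi m^2/n)$ by its mean value $\tfrac12$ near $m=0$, where it is instead of size $(m^2/n)^2$; it is precisely this extra vanishing against the pole of $\sin^{-2}(\pi m/n)$ that keeps those contributions at $O(n^{3/2})$ rather than $\Theta(n^2)$. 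Once part~(i) is in hand, parts~(ii) and~(iii) follow cheaply from the energy identity and the phase relation $C_{Y_n,Y_n}(w)=e^{-\pi iw/n}C_{X_n,X_n}(w)$.
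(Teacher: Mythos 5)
Your argument is correct, and for part~(ii) it takes a genuinely different and more elementary route than the paper. For part~(i) you are essentially reproducing the paper's computation: the paper likewise reduces to explicit geometric sums (Lemma~\ref{lem:mf1}), splits by the parity of the shift and of $n$, and finds that the dominant contribution comes from the odd shifts near $\pm n$ (its substitution $v=m+1-w$ is exactly your localisation at $u\approx n$), with the limit $1/2$ emerging from $\sum_{w\ge1}(2w-1)^{-2}=\pi^2/8$; the only difference is that the paper disposes of the non-resonant shifts by citing the known asymptotic of Lemma~\ref{lem:mf_chu_schmidt}, whereas you use the elementary tail bound $\sum_{m\ge M}m^{-2}=O(1/M)$. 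Two details to tighten when you write this up: the numerator factor $\cos^2(\pi j^2/(2n))$ is not $1+o(1)$ uniformly out to $\abs{j}=\sqrt n$, so cut the cluster at $\abs{j}\le n^{1/2-\delta}$ and absorb the rest into the error term; and the even shifts near $u=\pm 2n$ also have a small denominator, so they need the same fourth-order-vanishing argument as those near $u=0$ (or the reflection symmetry $\abs{C_{X_n,X_n}(u)}=\abs{C_{X_n,X_n}(2n-u)}$). For part~(ii) the paper identifies $\abs{C_{X_n,Y_n}(u)}$ pointwise with a generalised Gauss sum and invokes Paris's expansion (Proposition~\ref{prop:paris}, via Lemmas~\ref{lem:bound_on_generalised_gauss_sum} and~\ref{lem:asymptotic_generalised_gauss_sum}) to get $\abs{C_{X_n,Y_n}(2v)}^2=n+O(n^{5/6})$ for most $v$. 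Your route through the identity $\sum_u\abs{C_{A,B}(u)}^2=\sum_w\overline{C_{A,A}(w)}\,C_{B,B}(w)$ combined with the exact phase relation $C_{Y_n,Y_n}(w)=e^{-\pi iw/n}C_{X_n,X_n}(w)$ (which checks out: the two autocorrelations are conjugate geometric sums, and the ratio collapses to $e^{-\pi iw/n}$) sidesteps the Gauss-sum machinery entirely and reduces~(ii) to the energy localisation already established in~(i). What you lose is the pointwise information that every individual even-shift crosscorrelation has magnitude about $\sqrt n$, which the paper's method provides; what you gain is a self-contained elementary proof of Theorem~\ref{thm:main}. (In fact the same identity with $C_{Y,Y}(w)=\overline{C_{X,X}(w)}$ would also yield Theorem~\ref{thm:main2}~(ii) directly from Theorem~\ref{thm:main2}~(i), since $\bigabs{\sum_{w\ne0}\overline{C_{X,X}(w)}^2}\le n^2\adf(X_n)=O(n^{3/2})$.)
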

\par
We shall prove the first parts of Theorems~\ref{thm:main2} and~\ref{thm:main} in Section~\ref{sec:acf} and the second parts in Section~\ref{sec:ccf}. Of course the third parts are trivial consequences of the first two parts.
\par
We close this section with some clarifying comments on the relationship of~\cite{Kat2016} with earlier work, whose detailed discussion in this paper was requested by one of the referees. The initial version of this paper contained the side remark ``Unlike claimed in~\cite{Kat2016}, the crosscorrelation demerit factors of pairs obtained from m-sequences and Legendre sequences have been studied before, namely in~\cite[Theorem~3]{SchJedPar2009} and~\cite[Theorem 9]{JedSch2010}.'' As indicated by the referee, this statement is incorrect; as explained in~\cite[Section~II.C]{Kat2016}, Sarwate~\cite{Sar1984} studied the crosscorrelation demerit factors of pairs of m-sequences. Our remark was an attempt to comment on the following statement of~\cite[p.~5238]{Kat2016}: ``\dots relatively little is known about the more difficult problem of characterizing the crosscorrelation of important sequence families''. Indeed special cases of~\cite[Theorems 3 and 4]{Kat2016}, which are the main results of~\cite{Kat2016}, were previously obtained in~\cite[Theorem~9]{JedSch2010} (for m-sequences) and~\cite[Theorem~3]{SchJedPar2009} (for Legendre sequences) and these earlier results were not mentioned in~\cite{Kat2016}. To the best of our knowledge,~\cite[Theorem~3]{SchJedPar2009} contains the first results on the crosscorrelation demerit factors of distinct sequences obtained from multiplicative characters of finite fields. We note however that in the results of~\cite[Theorem~9]{JedSch2010} and~\cite[Theorem~3]{SchJedPar2009} (unlike in the more general results of~\cite[Theorems 3 and 4]{Kat2016}), one sequence in the pair is a cyclic shift of the other sequence in the pair, which means that one value of the crosscorrelation is at least as large as half the sequence length. This makes these pairs bad choices in practice.


\section{Autocorrelation of Chu sequences}
\label{sec:acf}

In this section we prove our results on the autocorrelations of the Chu sequences in question. We begin with a lemma, which gives an expression for the autocorrelation demerit factor of Chu sequences.
\begin{lemma}
\label{lem:mf1}
Let $a$ and $n\ge 1$ be integers and write $d=\gcd(a,n)$. Then 
\[
\adf(Z_n^{(a)})=\frac{4d}{n^2}\sum_{1\leq u\le n/(2d)}\bigg(\frac{\sin(\pi a u^2/n)}{\sin(\pi au/n)}\bigg)^2+\frac{(d-1)(2d-1)}{3d}-\frac{2d\eps_{n/d}}{n^2},
\]
where $\eps_{n/d}=1$ if ${n/d}\equiv 2\pmod 4$ and $\eps_{n/d}=0$ otherwise.
\end{lemma}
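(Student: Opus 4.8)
The plan is to compute $\adf(Z_n^{(a)})$ directly from the definition by evaluating each aperiodic autocorrelation $C(u):=C_{Z_n^{(a)},Z_n^{(a)}}(u)$. Writing $z_j=e^{\pi i aj^2/n}$, for $0\le u<n$ we have $C(u)=\sum_{j=0}^{n-1-u}z_j\overline{z_{j+u}}=\sum_{j=0}^{n-1-u}e^{\pi i a(j^2-(j+u)^2)/n}=e^{-\pi i au^2/n}\sum_{j=0}^{n-1-u}e^{-2\pi i auj/n}$. This is a finite geometric series in the variable $e^{-2\pi i au/n}$, so the key dichotomy is whether $e^{-2\pi i au/n}=1$, i.e. whether $n\mid au$, equivalently $(n/d)\mid u$ where $d=\gcd(a,n)$. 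First I would split the outer sum $\sum_{u\ne 0}\abs{C(u)}^2$ according to this: the ``degenerate'' shifts $u$ that are nonzero multiples of $n/d$, and the remaining shifts.

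For the degenerate shifts $u=k\,n/d$ with $1\le k\le d-1$, the geometric series collapses and $\abs{C(u)}=\abs{\sum_{j=0}^{n-1-u}1}=n-u=n-kn/d$. Summing $\abs{C(u)}^2$ over these and over their negatives (using $C(-u)=\overline{C(u)}$, so $\abs{C(-u)}=\abs{C(u)}$) gives $2\sum_{k=1}^{d-1}(n-kn/d)^2=\frac{2n^2}{d^2}\sum_{k=1}^{d-1}(d-k)^2=\frac{2n^2}{d^2}\sum_{m=1}^{d-1}m^2=\frac{2n^2}{d^2}\cdot\frac{(d-1)d(2d-1)}{6}=\frac{n^2(d-1)(2d-1)}{3d}$; dividing by $C(0)^2=n^2$ yields the term $\frac{(d-1)(2d-1)}{3d}$. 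For the non-degenerate shifts, summing the geometric series gives $\abs{C(u)}=\bigl|\frac{\sin(\pi au(n-u)/n)}{\sin(\pi au/n)}\bigr|$ after the standard simplification $\bigl|\frac{1-w^{n-u}}{1-w}\bigr|=\bigl|\frac{\sin(\pi au(n-u)/n)}{\sin(\pi au/n)}\bigr|$ with $w=e^{-2\pi i au/n}$; since $au(n-u)/n\equiv -au^2/n\pmod 1$ (because $au\in\Z$), we get $\abs{C(u)}=\bigl|\frac{\sin(\pi au^2/n)}{\sin(\pi au/n)}\bigr|$. Again pairing $u$ with $-u$ doubles the contribution, and the range $1\le u\le n-1$ with $(n/d)\nmid u$ is symmetric about $n/2$, so $\sum$ over it equals $2\sum_{1\le u< n/2,\ (n/d)\nmid u}$; combined with the factor $4d/n^2$ in the statement this matches the main term once the summation range is rewritten as $1\le u\le n/(2d)$ after substituting — actually one must be slightly careful here: I expect the intended reduction is that the summand $(\sin(\pi au^2/n)/\sin(\pi au/n))^2$ is periodic in $u$ with period $n/d$ and even, so summing over a half-period $1\le u\le n/(2d)$ and multiplying by the appropriate count recovers the full sum; this periodicity check is where the bookkeeping is most delicate.

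The remaining subtlety, and the step I expect to be the main obstacle, is the correction term $-2d\eps_{n/d}/n^2$: when $n/d$ is even, the shift $u=n/(2d)\cdot d\cdot(\text{odd})$... more precisely the ``middle'' shift $u=n/2$ (which occurs among the non-degenerate shifts exactly when $n$ is even and is its own negative, so it should be counted once rather than twice) must be handled separately, and it is a genuine fixed point of $u\mapsto n-u$; moreover when $n/d\equiv 2\pmod 4$ one finds that the half-period endpoint $u=n/(2d)$ contributes a boundary term that is over- or under-counted by the doubling. Concretely I would isolate $u=n/2$: there $\abs{C(n/2)}^2$ equals $(\sin(\pi a n/4)/\sin(\pi a/2))^2$ when $n$ is even and $2\nmid a$ (so $d$ odd, $n/d$ even), and one computes this explicitly — it contributes to the half-period sum $\sum_{1\le u\le n/(2d)}$ as an endpoint term $u=n/(2d)$, whose value times $4d/n^2$ overshoots the correct doubled count by exactly $2d/n^2$ precisely when $n/d\equiv 2\pmod 4$ (forcing $\sin(\pi au/n)$ and $\sin(\pi au^2/n)$ into the configuration giving value $1$), whence the $\eps_{n/d}$ correction. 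So the plan is: (1) evaluate $C(u)$ via the geometric series; (2) split off degenerate shifts and get $\frac{(d-1)(2d-1)}{3d}$; (3) fold the non-degenerate shifts onto a half-period using periodicity and the $u\leftrightarrow n-u$ symmetry to get the main sum with coefficient $4d/n^2$; (4) carefully audit the endpoint/self-paired shift $u=n/(2d)$ (equivalently $u=n/2$) to extract the $-2d\eps_{n/d}/n^2$ correction. Step (4) is the one requiring real care; steps (1)–(3) are routine.
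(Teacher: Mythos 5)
Your plan follows the paper's proof essentially step for step: evaluate $C(u)$ as a geometric series, split off the shifts divisible by $n/d$ to obtain the term $(d-1)(2d-1)/(3d)$, fold the remaining shifts onto a half-period using the period-$n/d$ periodicity of the squared summand and the symmetry $u\mapsto n/d-u$, and subtract the doubly counted fixed point, whose squared modulus is exactly $\eps_{n/d}$. The only slip is in the narrative of your step (4) — the self-paired shifts are those with $u\equiv n/(2d)\pmod{n/d}$ rather than $u=n/2$ in general — but you correctly identify the endpoint $u=n/(2d)$ and its value in the end, so the argument goes through as in the paper.
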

\begin{proof}
Write $Z=Z_n^{(a)}$, $a=db$, and $n=dm$. Straightforward manipulations give
\[
\abs{C_{Z,Z}(u)}=\abs{C_{Z,Z}(-u)}=\Biggabs{\sum_{j=0}^{n-u-1}e^{2\pi i b u j/m}}
\]
for $0\le u<n$. Hence, if $u$ is not a multiple of $m$, then
\[
\abs{C_{Z,Z}(u)}=\biggabs{\frac{\sin(\pi b u^2/m)}{\sin(\pi b u/m)}}.
\]
Therefore
\begin{align*}
\adf(Z_n^{(a)})&=\frac{2}{n^2}\sum_{u=1}^{n-1}\abs{C_{Z,Z}(u)}^2\\
&=\frac{2d}{n^2}\sum_{u=1}^{m-1}\abs{C_{Z,Z}(u)}^2+\frac{2}{n^2}\sum_{k=1}^{d-1}\abs{C_{Z,Z}(km)}^2\\
&=\frac{4d}{n^2}\sum_{1\le u\le m/2}\abs{C_{Z,Z}(u)}^2+\frac{2}{n^2}\sum_{k=1}^{d-1}\abs{C_{Z,Z}(km)}^2-\frac{2d\eps_m}{n^2}
\end{align*}
since $\abs{C_{Z,Z}(u)}=\abs{C_{Z,Z}(m-u)}$ for $1\le u<m$ and $\abs{C_{Z,Z}(m/2)}=\eps_m$ for even $m$. Finally, note that
\begin{align*}
\sum_{k=1}^{d-1}\abs{C_{Z,Z}(km)}^2&=\sum_{k=1}^{d-1}(n-km)^2\\
&=m^2\sum_{k=1}^{d-1}k^2\\
&=\frac{n^2}{6d}(d-1)(2d-1),
\end{align*}
which completes the proof.
\end{proof}
\par
The second author obtained~\eqref{eqn:adf_chu} from Lemma~\ref{lem:mf1} with $a=1$ and the first identity in the following lemma.
\begin{lemma}[\cite{Sch2013}]
\label{lem:mf_chu_schmidt}
We have
\begin{align*}
\lim_{n\to\infty}\frac{1}{n^{3/2}}\sum_{1\leq u \leq n/2}\bigg(\frac{\sin(\pi u^2/n)}{\sin(\pi u/n)}\bigg)^2&=\frac{1}{2\pi},\\
\lim_{n\to\infty}\frac{1}{n^{3/2}}\sum_{1\leq u \leq n/2}\bigg(\frac{\sin(\pi u^2/n)}{\pi u/n}\bigg)^2&=\frac{1}{2\pi}.
\end{align*}
\end{lemma}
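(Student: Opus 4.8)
The plan is to prove the second limit directly, by recognising the sum as a Riemann sum and evaluating the limiting integral, and then to obtain the first limit from the second by controlling the effect of replacing $\sin(\pi u/n)$ by $\pi u/n$ in the denominator.

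For the second limit, I would set $f(t)=\sin^2(\pi t^2)/t^2$ for $t>0$ and $f(0)=0$. Since $\abs{\sin y}\le\min(\abs y,1)$, we have $f(t)\le\min(\pi^2 t^2,t^{-2})$, so $f$ is continuous on $[0,\infty)$ and $\int_0^\infty f(t)\,dt$ converges. The substitution $t=u/\sqrt n$ turns the sum into
\[
\frac{1}{n^{3/2}}\sum_{1\le u\le n/2}\biggl(\frac{\sin(\pi u^2/n)}{\pi u/n}\biggr)^2=\frac{1}{\pi^2}\cdot\frac{1}{\sqrt n}\sum_{1\le u\le n/2}f\Bigl(\frac{u}{\sqrt n}\Bigr),
\]
whose right-hand side is $\pi^{-2}$ times a Riemann sum for $\int_0^\infty f$ of mesh $1/\sqrt n$ with upper endpoint $\sqrt n/2\to\infty$. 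To justify the convergence of this Riemann sum one fixes a threshold $T>0$ and splits the sum at $u=\lfloor T\sqrt n\rfloor$: continuity of $f$ on $[0,T]$ gives $\frac{1}{\sqrt n}\sum_{1\le u\le T\sqrt n}f(u/\sqrt n)\to\int_0^T f$, whereas the tail is bounded uniformly in $n$ by $\frac{1}{\sqrt n}\sum_{u>T\sqrt n}(u/\sqrt n)^{-2}=\sqrt n\sum_{u>T\sqrt n}u^{-2}=O(1/T)$, and likewise $\int_T^\infty f=O(1/T)$; letting $T\to\infty$ shows that the Riemann sum tends to $\int_0^\infty f$.

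It then remains to evaluate $\int_0^\infty\sin^2(\pi t^2)/t^2\,dt$. Writing $\sin^2 x=(1-\cos 2x)/2$ and integrating by parts (both boundary terms vanish, since $1-\cos(2\pi t^2)=O(t^4)$ as $t\to 0$), this integral equals $2\pi\int_0^\infty\sin(2\pi t^2)\,dt$, and the Fresnel integral $\int_0^\infty\sin(at^2)\,dt=\tfrac12\sqrt{\pi/(2a)}$ with $a=2\pi$ gives the value $2\pi\cdot\tfrac14=\pi/2$. Hence the second limit equals $\pi^{-2}\cdot\pi/2=1/(2\pi)$.

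For the first limit, note that $\sin(\pi u/n)\le\pi u/n$ for $1\le u\le n/2$, so each summand of the first sum is at least the corresponding summand of the second, and their difference is
\[
D_n=\sum_{1\le u\le n/2}\sin^2\Bigl(\frac{\pi u^2}{n}\Bigr)\biggl(\frac{1}{\sin^2(\pi u/n)}-\frac{1}{(\pi u/n)^2}\biggr).
\]
The factor $1/\sin^2 x-1/x^2$ extends continuously to $[0,\pi/2]$ (with value $1/3$ at $x=0$) and is therefore bounded there, so $D_n\le c\sum_{1\le u\le n/2}\min(1,\pi^2 u^4/n^2)$ for an absolute constant $c$; splitting this range according to whether $\pi^2 u^4/n^2\le 1$ gives $D_n=O(n^{1/2})+O(n)=o(n^{3/2})$. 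Hence the two sums share the leading term $n^{3/2}/(2\pi)$, which yields the first limit. The only step needing genuine care is the uniform tail bound in the Riemann-sum argument — necessary precisely because $f$ oscillates ever faster as $t\to\infty$ — while the remaining estimates and the Fresnel evaluation are routine.
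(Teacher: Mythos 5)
Your proof is correct. Note that the paper does not prove this lemma at all: it is imported verbatim from the reference [Sch2013], so there is no in-paper argument to compare against. Your self-contained proof is the natural one and, as far as the underlying ideas go, matches the route taken in that reference: interpret the second sum, after the substitution $t=u/\sqrt n$, as a Riemann sum of mesh $1/\sqrt n$ for $\pi^{-2}\int_0^\infty \sin^2(\pi t^2)/t^2\,dt$, justify the passage to the improper integral by the uniform tail bound $f(t)\le t^{-2}$, and evaluate the integral via integration by parts and the Fresnel integral to get $\pi/2$, hence the limit $1/(2\pi)$. The reduction of the first limit to the second is also sound: on $(0,\pi/2]$ the function $1/\sin^2x-1/x^2$ is bounded (extending to $1/3$ at $0$), so the difference of the two sums is $O(n)=o(n^{3/2})$ -- indeed the crude bound $D_n\le c\,n/2$ already suffices, so your finer split into $O(n^{1/2})+O(n)$ is not even needed. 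All the individual estimates check out (the bound $f(t)\le\min(\pi^2t^2,t^{-2})$, the vanishing of both boundary terms in the integration by parts, and $\int_0^\infty\sin(2\pi t^2)\,dt=1/4$), and the one point you flag as delicate -- the uniform-in-$n$ control of the oscillatory tail before letting $T\to\infty$ -- is handled correctly.
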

\par
From Lemmas~\ref{lem:mf1} and~\ref{lem:mf_chu_schmidt} we also obtain 
\[
\lim_{n\to\infty} n^{1/2}\adf(Z_n^{(-1)})=\frac{2}{\pi},
\]
which implies Theorem~\ref{thm:main2}~(i). We now use Lemmas~\ref{lem:mf1} and~\ref{lem:mf_chu_schmidt} to prove Theorem~\ref{thm:main}~(i).
\begin{proof}[Proof of Theorem~\ref{thm:main}~(i)]
We distinguish the cases that $n$ runs through the set of even and odd positive integers. First we show that
\begin{equation}
\lim_{m\to\infty}\adf(X_{2m})=\lim_{m\to\infty}\adf(Y_{2m})=1/2.   \label{eqn:mf_2m}
\end{equation}
It is readily verified that the aperiodic autocorrelations of $X_{2m}$ and $Y_{2m}$ have equal magnitudes at all shifts, so that it is sufficient to establish that $\adf(X_{2m})\to 2$. Noting that $2m+1$ is coprime to $4m$ and using trigonometric addition formulas, Lemma~\ref{lem:mf1} with $n=4m$ and $a=2m+1$ shows that $4m^2\adf(X_{2m})$ equals
\[
\sums{u=1\\\text{$u$ even}}^{2m}\bigg(\frac{\sin(\pi u^2/(4m))}{\sin(\pi u/(4m))}\bigg)^2+\sums{u=1\\\text{$u$ odd}}^{2m}\bigg(\frac{\cos(\pi u^2/(4m))}{\cos(\pi u/(4m))}\bigg)^2.
\]
By Lemma~\ref{lem:mf_chu_schmidt}, the first sum is $O(m^{3/2})$, so that it is sufficient to show that
\begin{equation}
\label{eqn:proof_of_thm_main_cos}
\lim_{m\to\infty}\frac{1}{m^2}\sum_{v=1}^m\bigg(\frac{\cos((\pi (2v-1)^2/(4m))}{\cos(\pi (2v-1)/(4m))}\bigg)^2=2.
\end{equation}
Let $x$ be a real number satisfying $0<x< \pi/2$. From the Taylor series of $\cos x$ at $\pi/2$ we find that
\[
-(x-\pi/2)+\tfrac{1}{6}(x-\pi/2)^3<\cos x<-(x-\pi/2),
\]
from which it follows that
\[
0<\frac{1}{(\cos x)^2}-\frac{1}{(x-\pi/2)^2}<1.
\]
Therefore,
\[
\Biggabs{\sum_{v=1}^m\bigg(\frac{\cos(\pi (2v-1)^2/(4m))}{\cos(\pi (2v-1)/(4m))}\bigg)^2-\sum_{v=1}^m\bigg(\frac{\cos(\pi (2v-1)^2/(4m))}{\pi (2v-1)/(4m)-\pi/2}\bigg)^2\,}<m.
\]
Put $v=m+1-w$ to obtain
\[
\sum_{v=1}^m\bigg(\frac{\cos(\pi (2v-1)^2/(4m))}{\pi (2v-1)/(4m)-\pi/2}\bigg)^2=\sum_{w=1}^m\bigg(\frac{\cos(\pi (2w-1)^2/(4m))}{\pi(2w-1)/(4m)}\bigg)^2.
\]
Apply $(\cos x)^2=1-(\sin x)^2$ to the right hand side and use Lemma~\ref{lem:mf_chu_schmidt} to obtain
\begin{align*}
\lim_{m\to\infty}\frac{1}{m^2}\sum_{w=1}^m\bigg(\frac{\cos(\pi (2w-1)^2/(4m))}{\pi(2w-1)/(4m)}\Bigg)^2&=\frac{16}{\pi^2}\sum_{w=1}^{\infty}\frac{1}{(2w-1)^2}\\
&=\frac{16}{\pi^2}\Bigg(\sum_{w=1}^\infty\frac{1}{w^2}-\sum_{w=1}^\infty\frac{1}{(2w)^2}\Bigg)\\
&=\frac{12}{\pi^2}\sum_{w=1}^\infty\frac{1}{w^2}=2,
\end{align*}
using Euler's evaluation $\sum_{w=1}^\infty1/w^2=\pi^2/6$. This gives~\eqref{eqn:proof_of_thm_main_cos}, and so completes the proof of~\eqref{eqn:mf_2m}.
\par
Next we prove that
\begin{equation}
\lim_{m\to\infty}\adf(X_{2m+1})=\lim_{m\to\infty}\adf(Y_{2m+1})=1/2.   \label{eqn:mf_2m1}
\end{equation}
Note that $\gcd(2m+2,4m+2)=2$, so that Lemma~\ref{lem:mf1} with $n=4m+2$ and $a=2m+2$ gives
\[
\adf(X_{2m+1})=\frac{8}{(4m+2)^2}\sum_{u=1}^m\Bigg(\frac{\sin(\pi (m+1) u^2/(2m+1))}{\sin(\pi (m+1) u/(2m+1))}\bigg)^2+\frac{1}{2}.
\]
On the other hand, $\gcd(2m,4m+2)=2$, so that Lemma \ref{lem:mf1} with $n=4m+2$ and $a=2m$ gives
\[
\adf(Y_{2m+1})=\frac{8}{(4m+2)^2}\sum_{u=1}^m\bigg(\frac{\sin(\pi m u^2/(2m+1))}{\sin(\pi m u/(2m+1))}\bigg)^2+\frac{1}{2}.
\]
Comparing the two preceding equations and using
\[
\biggabs{\sin\bigg(\frac{\pi m k}{2m+1}\bigg)}=\biggabs{\sin\bigg(\frac{\pi m k}{2m+1}-\pi k\bigg)}=\biggabs{\sin\bigg(\frac{\pi(m+1)k}{2m+1}\bigg)},
\]
we conclude that $\adf(X_{2m+1})=\adf(Y_{2m+1})$. To complete the proof, we show that 
\begin{equation}
\label{eqn:proof_of_thm_main_n_odd}
\lim_{m\to\infty}\frac{1}{m^2}\sum_{u=1}^m\bigg(\frac{\sin(\pi m u^2/(2m+1))}{\sin(\pi m u/(2m+1))}\bigg)^2=0.
\end{equation}
For even $k$, we have
\[
\biggabs{\sin\bigg(\frac{\pi m k}{2m+1}\bigg)}=\biggabs{\sin\bigg(\frac{\pi m k}{2m+1}-\frac{\pi}{2}k\bigg)}=\biggabs{\sin\bigg(\frac{\pi k}{4m+2}\bigg)},
\]
so that 
\[
\sums{u=1\\\text{$u$ even}}^m\bigg(\frac{\sin(\pi m u^2/(2m+1))}{\sin(\pi m u/(2m+1))}\bigg)^2=O(m^{3/2})
\]
by Lemma \ref{lem:mf_chu_schmidt}. For odd $k$, we have
\[
\biggabs{\sin\bigg(\frac{\pi m k}{2m+1}\bigg)}=\biggabs{\cos\bigg(\frac{\pi m k}{2m+1}-\frac{\pi}{2}k\bigg)}=\biggabs{\cos\bigg(\frac{\pi k}{4m+2}\bigg)},
\]
so that
\[
\sums{u=1\\ \text{$u$ odd}}^m\bigg(\frac{\sin(\pi m u^2/(2m+1))}{\sin(\pi m u/(2m+1))}\bigg)^2=\sums{u=1\\ \text{$u$ odd}}^m\bigg(\frac{\cos(\pi u^2/(4m+2))}{\cos(\pi u/(4m+2))}\bigg)^2.
\]
The summands on the right hand side are at most $2$, so that the entire sum is at most $m+1$. This proves~\eqref{eqn:proof_of_thm_main_n_odd}, and so completes the proof of~\eqref{eqn:mf_2m1}.
\end{proof}


\section{Crosscorrelation of Chu sequences}
\label{sec:ccf}

In this section we prove our results on the crosscorrelations of the Chu sequences in question. A straightforward computation shows that the aperiodic crosscorrelations between two Chu sequences of equal length are equal in modulus to \emph{generalised Gauss sums}, which are for real $x$ and $\theta$ and integral $N$ defined to be
\[
S_N(x,\theta)=\sum_{j=1}^Ne^{\pi ixj^2+2\pi i\theta j}.
\]
An asymptotic expansion of these sums was obtained by Paris~\cite{Par2014} using an asymptotic expansion of the error function. We deduce an estimate for generalised Gauss sums from this expansion. To state the result, define for real $x\ne 0$ and $\theta$,
\[
E(x,\theta)=e^{-\pi i \theta^2/x}\erfc\big(e^{\pi i/4}\theta\sqrt{\pi/x}\big),
\]
where, for complex $z$,
\[
\erfc(z)=1-\frac{2}{\sqrt{\pi}}\int_0^ze^{-t^2}\,dt
\]
is the complementary error function and the integral is over any path from~$0$ to~$z$.
\begin{proposition}[{\cite[Theorem~1]{Par2014}}]
\label{prop:paris}
Let $N$ be a positive integer, let $x\in (0,1)$, and let $\theta\in (-1/2,1/2]$. Write $Nx+\theta=M+\epsilon$, where $M$ is integral and $\epsilon\in (-1/2,1/2]$. Then
\begin{align*}
S_N(x,\theta)&=\frac{e^{-\pi i\theta^2/x+\pi i/4}}{\sqrt{x}}S_M(-1/x,\theta/x)+\frac{\mu-1}{2}\\
&~+\frac{e^{\pi i/4}}{2\sqrt{x}}\big(E(x,\theta)-\mu E(x,\epsilon)\big)+\frac{i}{2}\big(g(\theta)-\mu g(\epsilon)\big)+R,
\end{align*}
where $\abs{R}<x$ and $\mu=e^{\pi ixN^2+2\pi i\theta N}$ and $g:[-1/2,1/2]\to\R$ is given by
\[
g(t)=\begin{cases}
0 & \text{for $t=0$}\\
\cot(\pi t)-(\pi t)^ {-1} & \text{otherwise}.
\end{cases}
\]
\end{proposition}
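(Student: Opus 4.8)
The plan is to obtain the formula as an incomplete theta-transformation identity, in the spirit of Mordell's contour-integral evaluation of Gauss sums, and then to read off the stated shape from the large-argument asymptotics of $\erfc$. First I would show, by a Mordell-style contour argument, that $S_N(x,\theta)$ equals the difference of the integrals of $e^{\pi ixt^2+2\pi i\theta t}/(e^{2\pi it}-1)$ over the two rays issuing from $1/2$ and from $N+1/2$ in the direction $e^{\pi i/4}$ --- on those rays $e^{\pi ixt^2}$ decays like $e^{-\pi xs^2}$, the caps at infinity vanish, and the simple poles at $t=1,\dots,N$, of residues $e^{\pi ixt^2+2\pi i\theta t}/(2\pi i)$, are all that survives. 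Expanding $1/(e^{2\pi it}-1)$ in a geometric series on each half of each ray, using the periodicity of $e^{2\pi it}-1$ to translate the ray through $N+1/2$ onto the one through $1/2$ at the cost of the factor $\mu=e^{\pi ixN^2+2\pi i\theta N}$, and combining so that the individually divergent pieces cancel, one arrives at the clean identity
\[
S_N(x,\theta)=\sum_{m\in\Z}\int_{1/2}^{N+1/2}e^{\pi ixt^2+2\pi i(\theta+m)t}\,dt,
\]
which is just Poisson summation for $e^{\pi ixt^2+2\pi i\theta t}$ restricted to $[1/2,N+1/2]$ (the integers in that interval being exactly $1,\dots,N$).

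Completing the square in the $m$-th summand and recognising the incomplete Fresnel integral over the ray from a real point $a$ in direction $e^{\pi i/4}$ as a value of $\erfc$ at a fixed positive multiple of $a$ turns each term into the product of $e^{-\pi i(\theta+m)^2/x}$, a universal constant, and the difference of the two $\erfc$ values attached to the endpoints $\tfrac12+(\theta+m)/x$ and $N+\tfrac12+(\theta+m)/x$. Writing the latter as $\tfrac12+(\epsilon+m')/x$ with $m'=m+M$, using $Nx+\theta=M+\epsilon$, and reindexing shows that the right-endpoint family is, up to the factor $\mu$, the left-endpoint family with $\theta$ replaced by $\epsilon$; this is the source of the $\mu$'s and $\epsilon$'s in the statement.

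Now I would split the sum over $m$ according to the signs of the real parts of the two $\erfc$ arguments. For $m$ in the bulk range $-M\lesssim m\lesssim 0$ the left $\erfc$ is within $O(1/\abs m)$ of its limit $2$ while the right one is exponentially small; summing the constant $2$ and expanding $(\theta+m)^2$ reassembles exactly $\frac{e^{-\pi i\theta^2/x+\pi i/4}}{\sqrt x}\,S_M(-1/x,\theta/x)$ after $m\mapsto -m$. Terms far from both endpoints have the two $\erfc$'s either both near $0$ or both near $2$, hence contribute only to $R$. The single transition term near each endpoint, evaluated exactly rather than through the limit, yields $\frac{e^{\pi i/4}}{2\sqrt x}\big(E(x,\theta)-\mu E(x,\epsilon)\big)$; and the accumulated $O(1/\abs m)$ corrections beyond the limiting value $2$ across the bulk, resummed via $\sum_{m\in\Z}(-1)^m/(w+m)=\pi/\sin\pi w$ and $\cot\pi t=(\pi t)^{-1}+g(t)$, produce the elementary pieces $\tfrac{\mu-1}{2}$ and $\tfrac i2\big(g(\theta)-\mu g(\epsilon)\big)$.

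The hard part is the error: not merely $O(x)$, but the sharp $\abs R<x$. Since the $m$-sums are only conditionally convergent, the cancellations must be organised carefully, which is precisely what the contour derivation of the Poisson identity accomplishes by keeping the two rays together until the end. Replacing each $\erfc$ by the first terms of its asymptotic expansion $\erfc(z)\sim e^{-z^2}/(\sqrt\pi\,z)\cdot(1-\tfrac1{2z^2}+\cdots)$ is legitimate because all arguments stay in the sector $\abs{\arg z}\le 3\pi/4$; the neglected tails, bounded uniformly in $m$ and summed against the absolutely convergent series $\sum_m(x/2+\abs{\theta+m})^{-2}$ and its analogue at the other endpoint, contribute $O(x)$, and pinning the implied constant to exactly $1$ comes down to keeping the estimate sharp at the two or three dominant values of $m$. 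I would arrange the write-up so that the Poisson identity and the $\erfc$ representation are exact and every approximation is confined to a single final estimate.
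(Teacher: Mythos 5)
The paper does not prove this proposition at all: it is imported verbatim as Theorem~1 of the cited reference [Par2014], so there is no internal argument to compare against. Your reconstruction follows what is in fact the standard (and the cited source's) route --- Mordell's contour representation with the kernel $1/(e^{2\pi it}-1)$, geometric-series expansion into a Poisson-type identity, completion of the square to express each term through incomplete Fresnel integrals, i.e.\ values of $\erfc$, and then a split into bulk terms (reassembling $S_M(-1/x,\theta/x)$), transition terms (giving $E(x,\theta)$ and $\mu E(x,\epsilon)$), and resummed tail corrections (giving the $\cot$/$g$ pieces via $\sum_m(-1)^m/(w+m)=\pi/\sin\pi w$). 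Structurally this is right, including the reindexing $m'=m+M$ that explains the appearance of $\mu$ and $\epsilon$.

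The genuine gap is the error term. The statement asserts the sharp bound $\abs{R}<x$, and that bound is the entire content of the proposition beyond the classical approximate functional equation of Fiedler--Jurkat--K\"orner; your write-up concedes that pinning the constant ``comes down to keeping the estimate sharp at the two or three dominant values of $m$,'' which names the difficulty without resolving it. Moreover, even the weaker claim $R=O(x)$ is not actually established: the bulk sum is only conditionally convergent, the $\erfc$ corrections near the transition indices have arguments of modulus $O(\sqrt{x})$ where the asymptotic expansion $\erfc(z)\sim e^{-z^2}/(\sqrt{\pi}z)$ is useless, and the bookkeeping that isolates exactly $\tfrac{\mu-1}{2}+\tfrac{i}{2}(g(\theta)-\mu g(\epsilon))$ (rather than these terms with $\theta,\epsilon$ shifted by $x/2$, which is what the endpoint $1/2$ naively produces) is asserted rather than carried out. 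There is also a sign/orientation discrepancy to reconcile: the paper's $E$ uses $\erfc(e^{\pi i/4}\theta\sqrt{\pi/x})$, whereas your rotation of the Fresnel integral onto the ray in direction $e^{\pi i/4}$ naturally yields $\erfc(e^{-\pi i/4}\,\cdot\,)$. For the applications in this paper (Lemmas~\ref{lem:bound_on_generalised_gauss_sum} and~\ref{lem:asymptotic_generalised_gauss_sum}) a crude $R=O(1)$ would suffice and your sketch plausibly delivers that; but as a proof of the proposition as stated, the decisive quantitative step is missing.
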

\par
Fiedler, Jurkat, and K\"orner~\cite{FieJurKoe1977} obtained the following estimate from a slightly weaker version of Proposition~\ref{prop:paris}.
\begin{lemma}[{\cite[Lemma~4]{FieJurKoe1977}}]
\label{lem:bound_on_generalised_gauss_sum}
Let $N$, $k$, and $m$ be integers such that $\gcd(k,m)$ and $N/m$ are bounded by absolute constants and let $\theta$ be real. Then
\[
\abs{S_N(k/m,\theta)}=O(\sqrt{m}),
\] 
where the implicit constant is absolute.
\end{lemma}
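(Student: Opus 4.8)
The plan is to iterate the reciprocity formula of Proposition~\ref{prop:paris}, which is essentially how Fiedler, Jurkat and K\"orner obtained the estimate. I would begin by normalising the parameters. The modulus $\abs{S_N(x,\theta)}$ is invariant under the periodicities $S_N(x+2,\theta)=S_N(x,\theta+1)=S_N(x,\theta)$, under complex conjugation $\abs{S_N(x,\theta)}=\abs{S_N(-x,-\theta)}$, and---since $e^{\pi ij^2}=e^{\pi ij}$---under $S_N(x+1,\theta)=S_N(x,\theta+\tfrac12)$, so in particular $\abs{S_N(x,\theta)}=\abs{S_N(1-x,-\theta-\tfrac12)}$. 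After reducing $k$ modulo $2m$, combining these identities, and treating separately the easy case in which the resulting $x$ is a half-integer (then the phase degenerates to $\pm1$ or $\pm(-1)^j$ and $S_N$ is a geometric sum; this is where the hypothesis on $k$ and $m$ enters), I may assume that $x=p_0/q_0$ in lowest terms with $q_0=m$, $1\le p_0\le m/2$ (hence $x\in(0,\tfrac12]$), that $\theta\in(-\tfrac12,\tfrac12]$, and that $1\le N\le 2m$.

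Now the iteration. Applying Proposition~\ref{prop:paris} to $S_{N_i}(x_i,\theta_i)$, starting from $(N_0,x_0,\theta_0)=(N,x,\theta)$, gives
\[
\abs{S_{N_i}(x_i,\theta_i)}\le\frac{1}{\sqrt{x_i}}\,\abs{S_{M_i}(-1/x_i,\theta_i/x_i)}+O\!\left(\frac{1}{\sqrt{x_i}}\right),\qquad M_i\le N_ix_i+1,
\]
because the terms $\tfrac12(\mu-1)$, $\tfrac i2(g(\theta_i)-\mu g(\epsilon_i))$ and $R$ are $O(1)$ while the $E$-terms are $O(1/\sqrt{x_i})$ (as $\erfc$ stays bounded on the ray $\arg z=\pi/4$, where $\abs{e^{-z^2}}=1$). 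Renormalising the argument $-1/x_i$ by the identities above replaces $S_{M_i}(-1/x_i,\theta_i/x_i)$ by some $S_{N_{i+1}}(x_{i+1},\theta_{i+1})$ of equal modulus with $N_{i+1}=M_i$, $\theta_{i+1}\in(-\tfrac12,\tfrac12]$, and $x_{i+1}=p_{i+1}/q_{i+1}$ in lowest terms satisfying $q_{i+1}=p_i$ and $p_{i+1}\le p_i/2$. Thus $x_i=q_{i+1}/q_i$, the denominators $q_0=m>q_1>\cdots$ at least halve at each step, and after $L\le\log_2 m$ steps one reaches $x_L=0$, where $S_{N_L}(0,\theta_L)$ is a geometric sum of modulus $\le N_L$ (set $q_L=1$). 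Unrolling,
\[
\abs{S_N(k/m,\theta)}\le\Bigl(\prod_{j=0}^{L-1}\frac{1}{\sqrt{x_j}}\Bigr)N_L+\sum_{i=0}^{L-1}\Bigl(\prod_{j=0}^{i-1}\frac{1}{\sqrt{x_j}}\Bigr)O\!\left(\frac{1}{\sqrt{x_i}}\right).
\]
The decisive identity is $\prod_{j=0}^{i-1}x_j^{-1/2}=(q_0/q_i)^{1/2}=(m/q_i)^{1/2}$, so the first product equals $\sqrt m$ and the $i$-th summand is $O(\sqrt{m/q_{i+1}})$. Since $\prod_{j=0}^{L-1}x_j=q_L/q_0=1/m$ we get $N_L\le N_0/m+O(1)=O(1)$, so the first term is $O(\sqrt m)$; and since $q_i\ge 2^{L-i}$, the geometric sum $\sum_i q_i^{-1/2}$ is $O(1)$, so the error term is $O(\sqrt m)$ as well. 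Hence $\abs{S_N(k/m,\theta)}=O(\sqrt m)$ with an absolute constant.

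The main obstacle is making the bound on the non-leading terms of Proposition~\ref{prop:paris} quantitative and uniform, and then showing that their contribution, summed over the $O(\log m)$ iterations and weighted by the growing factors $(m/q_i)^{1/2}$, is only $O(\sqrt m)$ and not $O(\sqrt m\log m)$: this is exactly what the telescoping of $\prod x_j^{-1/2}$ together with the geometric decay of the $q_i$ delivers, and it is the reason a coarser ``complete the sum'' estimate for $S_N(k/m,\theta)$, which loses a factor $\log m$, would not be good enough. The remaining points are routine: the uniform boundedness of $E(x,\theta)=e^{-\pi i\theta^2/x}\erfc(e^{\pi i/4}\theta\sqrt{\pi/x})$ (via $\abs{e^{-z^2}}=1$ and the large-argument decay of $\erfc$ in the sector $\abs{\arg z}\le\pi/4$), the bookkeeping of the accumulated shifts in $\theta$ (harmless, since only moduli and residues modulo $1$ matter), and the handling of the degenerate half-integer and termination cases.
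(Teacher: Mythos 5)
The paper does not actually prove this lemma --- it is imported verbatim as \cite[Lemma~4]{FieJurKoe1977}, with only the remark that it follows from a weaker version of Proposition~\ref{prop:paris}. Your iteration of the reciprocity formula is exactly the Fiedler--Jurkat--K\"orner method, and the mechanics are right: the telescoping $\prod_{j=0}^{i-1}x_j^{-1/2}=(q_0/q_i)^{1/2}$, the geometric decay $q_{i+1}\le q_i/2$ that keeps the accumulated error at $O(\sqrt m)$ rather than $O(\sqrt m\log m)$, the control $N_L=O(1)$ via $\prod x_j=q_L/q_0$, and the uniform boundedness of $E(x,\theta)$ through the convergence of the Fresnel integral along $\arg z=\pi/4$ are all correct and are the essential points.

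The gap is the normalisation step ``I may assume $x=p_0/q_0$ in lowest terms with $q_0=m$.'' This is equivalent to assuming $\gcd(k,m)=1$, which is not among the stated hypotheses: reducing $k$ modulo $2m$ and applying your symmetries preserves $\gcd(k,m)$, so in general $q_0=m/\gcd(k,m)$, and your iteration then only yields $\abs{S_N}=O(\sqrt{q_0}\,)\cdot N_L$ with $N_L$ as large as $N/q_0\approx 2\gcd(k,m)$, i.e.\ a bound of order $\sqrt{m\gcd(k,m)}$. This loss is real, not an artifact of the method: the statement as printed is false without a coprimality hypothesis. For instance $k=m$, $\theta=1/2$, $N=2m$ satisfies $2m\nmid k$, yet
\[
S_{2m}(1,\tfrac12)=\sum_{j=1}^{2m}e^{\pi i j(j+1)}=2m,
\]
and similarly $k=2d$, $m=3d$, $\theta=0$, $N=6d$ gives $\abs{S_N(2/3,0)}=2\sqrt{3}\,d\gg\sqrt{m}$. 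So either you must add $\gcd(k,m)=1$ as a hypothesis (which is what \cite{FieJurKoe1977} in effect assume, and which holds after reduction in every application in this paper, where $k\in\{1,2\}$), or you must justify the step --- and the latter is impossible. As a referee's point this is really a flaw in the paper's transcription of the cited lemma rather than in your argument, but your proof as written silently papers over exactly the place where the missing hypothesis is needed, so you should make it explicit.
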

\par
From Proposition~\ref{prop:paris} we can also deduce the following result, which will be the key ingredient to our proofs of Theorem~\ref{thm:main2} and~\ref{thm:main}.
\begin{lemma}
\label{lem:asymptotic_generalised_gauss_sum} 
Let $m$ be a positive integer and let $u$ be an integer such that either $u$ or $m-u$ is in the set
\begin{equation}
\{w\in\Z:m^{2/3}\le w\leq m/2-m^{2/3}\}.   \label{eqn:set_u}
\end{equation}
Then 
\begin{equation}
\bigabs{S_{m-u}(2/m,u/m)}=\sqrt{\frac{m}{2}}+O(m^{1/3}),   \label{eqn:estimate_Gauss}
\end{equation}
where the implicit constant is absolute.
\end{lemma}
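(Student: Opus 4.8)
The plan is to apply Proposition~\ref{prop:paris} to $S_N(x,\theta)$ with $N=m-u$, $x=2/m$, and $\theta=u/m$, and to show that every term on the right-hand side except the leading one contributes only $O(m^{1/3})$. By the symmetry $\abs{S_N(x,\theta)}=\abs{S_N(x,-\theta)}$ (obtained by conjugating and reindexing $j\mapsto N+1-j$, which changes $\theta$ by an integer) together with a similar reduction, the case where $m-u$ lies in the set~\eqref{eqn:set_u} reduces to the case where $u$ does, so I may assume $m^{2/3}\le u\le m/2-m^{2/3}$; in particular $0<x<1$ and $\theta\in(-1/2,1/2]$ as required by the proposition, and $Nx+\theta=(m-u)(2/m)+u/m=2-u/m$, so that $M=2$ and $\epsilon=-u/m$ when $u/m<1/2$ (the boundary case $u=m/2$ is excluded by the lower bound $u\le m/2-m^{2/3}$). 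The leading term is then $\frac{e^{-\pi i\theta^2/x+\pi i/4}}{\sqrt{x}}\,S_M(-1/x,\theta/x)$ with $\sqrt{x}=\sqrt{2/m}$, so $\abs{\,\cdot\,}=\sqrt{m/2}\cdot\abs{S_2(-m/2,u/2)}$, and I would check by a direct two-term computation that $\abs{S_2(-m/2,u/2)}=1+O(1/m)$ — indeed $S_2(-m/2,u/2)=e^{-\pi i m/2+\pi i u}+e^{-2\pi i m+2\pi i u}$, whose second summand is $1$ and whose first has unit modulus, and one uses the constraint on $u$ or a parity argument to see the sum has modulus $1$ up to the claimed error. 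Actually the cleaner route: since $M=2$ is a fixed small integer independent of $m$, $S_M(-1/x,\theta/x)$ is a sum of two unimodular terms and I only need a crude bound $\abs{S_2(-1/x,\theta/x)}\le 2$ plus a sharper evaluation; I will carry out the exact modulus computation there.

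Next I would bound the remaining terms. The term $\frac{\mu-1}{2}$ has modulus at most $1$ since $\abs{\mu}=1$. The term $\frac{i}{2}(g(\theta)-\mu g(\epsilon))$ is controlled because $g$ is bounded on any interval bounded away from the endpoints $\pm 1/2$: here $\theta=u/m$ and $\epsilon=-u/m$ both satisfy $\abs{\,\cdot\,}\le 1/2-m^{-1/3}$, and near $t=\pm 1/2$ one has $\cot(\pi t)=O(1/2-\abs{t})$, so $\abs{g(\theta)},\abs{g(\epsilon)}=O(m^{1/3})$, giving $O(m^{1/3})$. The genuinely delicate term is $\frac{e^{\pi i/4}}{2\sqrt{x}}\big(E(x,\theta)-\mu E(x,\epsilon)\big)$, which carries a factor $\sqrt{x}^{-1}\asymp\sqrt{m}$ in front; to keep this $O(m^{1/3})$ I need $\abs{E(x,\theta)}$ and $\abs{E(x,\epsilon)}$ to be $O(m^{-1/6})$. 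This is where the lower bound $u\ge m^{2/3}$ in~\eqref{eqn:set_u} is used: $E(x,\theta)=e^{-\pi i\theta^2/x}\erfc(e^{\pi i/4}\theta\sqrt{\pi/x})$, and the argument of $\erfc$ has modulus $\abs{\theta}\sqrt{\pi/x}=(u/m)\sqrt{\pi m/2}\asymp u/\sqrt{m}\ge m^{1/6}$, which grows; I then invoke the standard asymptotic $\erfc(z)\sim e^{-z^2}/(z\sqrt{\pi})$ valid as $\abs{z}\to\infty$ in the sector $\abs{\arg z}<3\pi/4$. Here $z=e^{\pi i/4}\theta\sqrt{\pi/x}$ has argument $\pm\pi/4$, safely inside that sector, and $z^2=i\pi\theta^2/x$ is purely imaginary, so $\abs{e^{-z^2}}=1$; combined with $e^{-\pi i\theta^2/x}$ (also unimodular) this yields $\abs{E(x,\theta)}=O(1/\abs{z})=O(\sqrt{m}/u)=O(m^{-1/6})$ using $u\ge m^{2/3}$. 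The same estimate applies to $E(x,\epsilon)$ since $\abs{\epsilon}=u/m$.

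Assembling: the leading term contributes $\sqrt{m/2}\cdot(1+O(1/m))=\sqrt{m/2}+O(m^{-1/2})$, the $\mu$-term and $g$-terms and $R$ contribute $O(m^{1/3})$ (the $g$-terms being the dominant of these), and the $E$-term contributes $\sqrt{m}^{-1}\cdot O(m^{-1/6})$... wait, $\frac{1}{\sqrt{x}}\asymp\sqrt{m}$, so that term is $O(\sqrt{m}\cdot m^{-1/6})=O(m^{1/3})$. Hence $\abs{S_{m-u}(2/m,u/m)}=\sqrt{m/2}+O(m^{1/3})$, as claimed, with an absolute implicit constant since all the bounds above are absolute (the $\erfc$ asymptotic has an absolute constant in the stated sector, and $g$ is a fixed function). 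The main obstacle is precisely the interplay between the $\sqrt{m}$ prefactor and the decay of $E$: one must verify that the error-function asymptotic applies uniformly on the relevant ray $\arg z=\pi/4$ (a fixed ray, so uniformity is automatic) and that the lower cutoff $m^{2/3}$ in~\eqref{eqn:set_u} is exactly what makes $\abs{z}$ large enough — with a weaker cutoff the $E$-term would dominate the error. A secondary subtlety is handling the exact modulus of $S_2$ in the leading term rather than just bounding it by $2$; I would do this by writing out the two terms explicitly as above.
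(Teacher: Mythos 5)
Your overall strategy (apply Proposition~\ref{prop:paris} with $N=m-u$, $x=2/m$, $\theta=u/m$, so $M=2$, $\epsilon=-u/m$, $\mu=1$, then control each term separately) is the paper's starting point, but two of your term-by-term estimates are false, and they fail exactly where the real content of the lemma lies. First, $S_2(-1/x,\theta/x)=S_2(-m/2,u/2)=(-i)^m(-1)^u+1$, so its modulus is $\abs{1+\zeta}$ for a fourth root of unity $\zeta$, which equals $0$, $\sqrt{2}$, or $2$ according to $m\bmod 4$ and the parity of $u$ --- it is never $1+O(1/m)$, and no parity argument rescues this. Second, $\abs{E(x,\epsilon)}$ is \emph{not} $O(m^{-1/6})$: since $\epsilon=-u/m<0$, the argument of the $\erfc$ lies on the ray $\arg z=-3\pi/4$, where $\erfc(z)=2-\erfc(-z)\to 2$ (the function $\erfc$ is not even, so "the same estimate applies since $\abs{\epsilon}=u/m$" is wrong); hence $\abs{E(x,\epsilon)}=2+O(m^{-1/6})$ and the term $\frac{e^{\pi i/4}}{2\sqrt{x}}\big(E(x,\theta)-E(x,\epsilon)\big)$ has modulus of order $\sqrt{m}$, comparable to the leading term. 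The lemma is true because these two large contributions cancel: using $\erfc(-z)=2-\erfc(z)$ one writes $E(x,\epsilon)=2e^{-\pi i\theta^2/x}-E(x,\theta)$, and the piece $-2e^{-\pi i\theta^2/x}$ exactly annihilates the $+1$ in $S_2(-1/x,\theta/x)$, leaving $\frac{e^{\pi i/4}}{\sqrt{x}}\big(e^{-\pi i\theta^2/x}(-i)^m(-1)^u+E(x,\theta)\big)$ up to $O(m^{1/3})$; only then does your (correct) estimate $\abs{E(x,\theta)}=O(m^{-1/6})$ finish the argument. Your bookkeeping reaches the right total only because the two mistakes offset each other.

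The case where $m-u$ lies in the set~\eqref{eqn:set_u} also does not reduce to the first case as you claim: $\abs{S_N(x,\theta)}=\abs{S_N(x,-\theta)}$ is false in general (already for $N=2$), and the sums for $u$ versus $m-u$ in the set have different lengths $m-u$. The paper treats this case by a second application of Proposition~\ref{prop:paris} with $N=v=m-u$ and $\theta=-v/m$, where now $M=0$: the $S_M$ term vanishes identically and the \emph{entire} main term $\sqrt{m/2}$ comes out of the $E$-difference --- precisely the term your accounting discards as $O(m^{1/3})$, so your method would return $O(m^{1/3})$ instead of $\sqrt{m/2}$ there. (A minor further slip: $g$ does not threaten to blow up near $t=\pm 1/2$; $\cot(\pi t)$ vanishes there, and $g$ is bounded by $2/\pi$ on all of $[-1/2,1/2]$, so the $g$-terms are in fact $O(1)$. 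This one is harmless.)
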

\begin{proof}
Throughout the proof, all implicit constants are absolute. We first prove the desired bound when $u$ is in the set~\eqref{eqn:set_u}. This is an application of Proposition~\ref{prop:paris} with $N=m-u$, $x=2/m$, $\theta=u/m$, so that $M=2$ and $\epsilon=-u/m$ and $\mu=1$. We have
\begin{align*}
\cot(\pi\theta)-\cot(\pi\epsilon)-\frac{1}{\pi\theta}+\frac{1}{\pi\epsilon}&=2\cot(\pi u/m)-\frac{2m}{\pi u}\\
&=O(m^{1/3}),
\end{align*}
using that $\cot(\pi t)$ is nonnegative on $(0,1/2]$ and, for $m\ge 8$,
\begin{align*}
\cot(\pi u/m)&\le \cot\big(\pi m^{-1/3}\big)\\
&\le\frac{1}{\sin\big(\pi m^{-1/3}\big)}\\
&\le\frac{1}{2}m^{1/3}.
\end{align*}
Now use the identity
\[
\erfc(-z)=2-\erfc(z), 
\]
to obtain
\[
E(x,\theta)-E(x,-\theta)=2E(x,\theta)-2e^{-\pi i\theta^2/x}.
\]
Since
\[
S_2(-1/x,\theta/x)=(-i)^m(-1)^u+1,
\]
we then find from Proposition~\ref{prop:paris} that
\[
\abs{S_N(x,\theta)}=\sqrt{\frac{m}{2}}\;\bigabs{(-i)^m(-1)^u+E(x,\theta)}+O(m^{1/3}).
\]
From the asymptotic expansion~\cite[(1.3) and (1.4)]{Par2014} of $E(x,\theta)$ we find that
\[
\biggabs{E(x,\theta)-\frac{e^{-2\pi i\theta^2/x-\pi i/4}\sqrt{x}}{\pi\theta}}\le\frac{x^ {3/2}}{2\pi^2\theta^3},
\]
so that $\abs{E(x,\theta)}$ is $O(m^{-1/6})$ and~\eqref{eqn:estimate_Gauss} follows, as required.
\par
Now assume that $m-u$ is in the set~\eqref{eqn:set_u}. Put $v=m-u$ and apply Proposition~\ref{prop:paris} with $N=v$, $x=2/m$, and $\theta=-v/m$, so that $M=0$ and $\epsilon=v/m$ and $\mu=1$. Proceeding similarly as in the first case, we obtain
\[
\abs{S_N(x,\theta)}=\sqrt{\frac{m}{2}}\;\bigabs{e^{-\pi i\theta^2/x}-E(x,\theta)}+O(m^{1/3}),
\]
which again implies~\eqref{eqn:estimate_Gauss}.
\end{proof}
\par
We conclude by proving the second parts of Theorems~\ref{thm:main2} and~\ref{thm:main}.
\begin{proof}[Proof of Theorem~\ref{thm:main2}~(ii)]
For each $u\in\{0,1,\dots,n-1\}$, we have
\[
\abs{C_{X_n,Y_n}(u)}=\abs{C_{X_n,Y_n}(-u)}=\Biggabs{\sum_{j=0}^{n-u-1}e^{2\pi i(j^2+ju)/n}}.
\]
Extend the summation range to $n-u$ and compensate the extra term (which equals $1$) by taking out the term corresponding to $j=0$. The sum then equals $S_{n-u}(2/n,u/n)$ and therefore
\[
n^2\cdf(X_n,Y_n)=\abs{S_n(2/n,0)}^2+2\sum_{u=1}^{n-1}\abs{S_{n-u}(2/n,u/n)}^2.
\]
Now use Lemma~\ref{lem:asymptotic_generalised_gauss_sum} to find that $\abs{S_{n-u}(2/n,u/n)}^2=n/2+O(n^{5/6})$ when $u$ or $n-u$ is in the set
\[
\{w\in\Z:n^{2/3}\le w\le n/2-n^{2/3}\}
\]
and use Lemma~\ref{lem:bound_on_generalised_gauss_sum} to bound the remaining $O(n^{2/3})$ values of $\abs{S_{n-u}(2/n,u/n)}^2$ by $O(n)$. This gives
\[
\cdf(X_n,Y_n)=1+O(n^{-1/6}),
\]
as required.
\end{proof}
\par
\begin{proof}[Proof of Theorem~\ref{thm:main}~(ii)]
For each $u\in\{0,1,\dots,2n-1\}$, we have
\[
\abs{C_{X_n,Y_n}(u)}=\abs{C_{X_n,Y_n}(-u)}=\Biggabs{\sum_{j=0}^{2n-u-1}(-1)^{ju}e^{\pi i(j^2+ju)/n}}.
\]
If~$u$ is odd, then the summands corresponding to $j$ and $2n-u-j$ add to zero, leaving the summand corresponding to $j=0$. Therefore $\abs{C_{X_n,Y_n}(u)}=1$ for each odd $u$ satisfying $\abs{u}\le 2n-1$. Hence
\[
(2n)^2\cdf(X_n,Y_n)=\abs{S_{2n}(1/n,0)}^2+2\sum_{v=1}^{n-1}\abs{S_{2n-2v}(1/n,v/n)}^2+O(n).
\]
Now use Lemma~\ref{lem:asymptotic_generalised_gauss_sum} to find that $\abs{S_{2n-2v}(1/n,v/n)}^2=n+O(n^{5/6})$ when $2v$ or $2n-2v$ is in the set
\[
\{w\in\Z:(2n)^{2/3}\le w\le n-(2n)^{2/3}\}
\]
and use Lemma~\ref{lem:bound_on_generalised_gauss_sum} to bound the remaining $O(n^{2/3})$ values of $\abs{S_{2n-2v}(1/n,v/n)}^2$ by $O(n)$. This gives
\[
\cdf(X_n,Y_n)=\frac{1}{2}+O(n^{-1/6}),
\]
as required.
\end{proof}


\section{Concluding remarks}

We exhibited two families of pairs of unimodular sequences whose Pursley-Sarwate criterion tends to $1$ as the sequence length tends to infinity and determined the asymptotic autocorrelation and crosscorrelation demerit factors of the sequences in these pairs. In particular, the triple
\begin{equation}
\lim_{n\to\infty}\big(\adf(A_n),\adf(B_n),\cdf(A_n,B_n)\big)   \label{eqn:triple}
\end{equation}
equals $(0,0,1)$ for the pair of Chu sequences in Theorem~\ref{thm:main2} and equals $(\tfrac12,\tfrac12,\tfrac12)$ for the pair of Chu sequences in Theorem~\ref{thm:main}. The only other family of pairs of unimodular sequences with asymptotic Pursley-Sarwate criterion equal to $1$ and  for which~\eqref{eqn:triple} is known is the family of Shapiro sequence pairs, for which the triple~\eqref{eqn:triple} equals $(\tfrac13,\tfrac13,\tfrac23)$. We ask: what are the possible triples~\eqref{eqn:triple} for families of pairs of unimodular sequences $(A_n,B_n)$ of increasing length whose Pursley-Sarwate criterion is asymptotically $1$?



\providecommand{\bysame}{\leavevmode\hbox to3em{\hrulefill}\thinspace}
\providecommand{\MR}{\relax\ifhmode\unskip\space\fi MR }
\providecommand{\MRhref}[2]{%
  \href{http://www.ams.org/mathscinet-getitem?mr=#1}{#2}
}
\providecommand{\href}[2]{#2}

\end{document}